\documentclass[12pt]{article}
\usepackage{geometry}
\usepackage{amsthm,amsmath,amssymb}
\usepackage{graphicx}
\usepackage{pgf,tikz}
\usepackage[authoryear]{natbib}
\usepackage{placeins}

\newtheorem{theorem}{Theorem}[section]
\newtheorem{lemma}{Lemma}[section]

\newtheorem{proposition}{Proposition}[section]

\theoremstyle{definition}

\begin{document}
\title{Odds are the sign is right}
\author{Brian Knaeble and Julian Chan} %\\ Utah Valley University \\ bknaeble@uvu.edu}
\date{2018}
\maketitle
\begin{abstract}
This article introduces a new condition based on odds ratios for sensitivity analysis.  The analysis involves the average effect of a treatment or exposure on a response or outcome with estimates adjusted for and conditional on a single, unmeasured, dichotomous covariate.  Results of statistical simulations are displayed to show that the odds ratio condition is as reliable as other commonly used conditions for sensitivity analysis.  Other conditions utilize quantities reflective of a mediating covariate.  The odds ratio condition can be applied when the covariate is a confounding variable.  As an example application we use the odds ratio condition to analyze and interpret a positive association observed between Zika virus infection and birth defects. 
\end{abstract}

\section{Introduction}
\label{int}
Simpson's paradox has been defined as a surprising situation that may occur when two populations are compared with respect to the incidence of some attribute: if the populations are separated in parallel into a set of descriptive categories, the population with higher overall incidence may yet exhibit a lower incidence within each such category \citep{Wagner82}.  With two categories Simpson's paradox occurs when $P(\textit{Y}=1|\textit{X}=1)>P(\textit{Y}=1|\textit{X}=0)$ while both $P(\textit{Y}=1|\textit{X}=1,\textit{W}=1)<P(\textit{Y}=1|\textit{X}=0,\textit{W}=1)$ and $P(\textit{Y}=1|\textit{X}=1,\textit{W}=0)<P(\textit{Y}=1|\textit{X}=0,\textit{W}=0)$.  We assume dichotomous variables and $P(\textit{Y}=1|\textit{X}=1)>P(\textit{Y}=1|\textit{X}=0)$ throughout this article. For this paper $\textit{Y}$ is the response or outcome, $\textit{X}$ records the treatment or exposure, and $\textit{W}$ is the covariate thought to be a confounding variable.

Given a $2\times 2\times 2$ contingency table and a uniform distribution on a simplex of cell probabilities that sum to unity, \citet{Pavlides09} show that approximately one in sixty contingency tables exhibit Simpson's paradox.  When Simpson's paradox occurs different conclusions are drawn depending on whether the aggregate or disaggregate data is interpreted.  \citet{Pearl14} argues that the paradox has been resolved through the use of causal diagrams.  When $\textit{W}$ is a confounding variable, with a causal impact on both $\textit{X}$ and $\textit{Y}$, then the disaggregate data should be used.

It is common for researchers to have data on $\textit{X}$ and $\textit{Y}$ while the confounding $\textit{W}$ remains unobserved.  Properties of $\textit{W}$ capturing how it relates to $\textit{X}$ and $\textit{Y}$ can be useful during sensitivity analysis even when these properties are not directly measured.  For example, \citet{Cornfield59} write ``The magnitude of the excess lung cancer risk ($\textit{Y}$) amongst cigarette smokers ($\textit{X}$) is so great that the results can not be interpreted as arising from an indirect association of cigarette smoking with some other agent or characteristic ($\textit{W}$), since this hypothetical agent would have to be at least as strongly associated with lung cancer as cigarette use; no such agent has been found or suggested.''  The argument relies in part on the ability of experts to assess possible magnitudes of association in the absence of data.

The logical condition, ``this hypothetical agent would have to be at least as strongly associated with lung cancer as cigarette use,'' is part of what is now referred to as Cornfield's condition.  \citet{Ding14,Ding16} have generalized Cornfield's condition, leading to the E-value for categorical sensitivity analysis.  The E-value is a quantity supplemental to the p-value and related to evidence for causality in observational studies \citep{Ding17}.  Alternatively, there is an odds ratio condition, derived here in this article from an optimal condition for continuous sensitivity analysis \citep[Figure 1, Remark A.1]{Knaeble17}. 

The purpose of this article is to introduce this odds ratio condition.  We conduct simulations to show that this odds ratio condition is as reliable as other commonly used conditions for basic categorical sensitivity analysis.  Relevant definitions are provided in Section 2, where our methodology is described.  The results of simulations are displayed in Section 3.  The utility of the odds ratio condition is demonstrated in Section 4, with an example case study highlighting how the odds ratio condition reflects the causal structure of confounding, not mediation.  A detailed discussion occurs in Section 6.  Supporting proofs are found in the appendix, where we derive the odds ratio condition and show it to be necessary for Simpson's paradox.
\section{Methods}
\label{meth}
Henceforth we refer to Simpson's paradox as defined in the introduction as Strong Simpson's Paradox.  Weak Simpson's Paradox occurs when either $P(\textit{Y}=1|\textit{X}=1,\textit{W}=1)<P(\textit{Y}=1|\textit{X}=0,\textit{W}=1)$ or $P(\textit{Y}=1|\textit{X}=1,\textit{W}=0)<P(\textit{Y}=1|\textit{X}=0,\textit{W}=0)$ but not both occur.  When the adjusted risk difference,
\begin{align*}
&P(\textit{W}=1)\left(P(\textit{Y}=1|\textit{X}=1,\textit{W}=1)-P(\textit{Y}=1|\textit{X}=0,\textit{W}=1)\right)+\\
&P(\textit{W}=0)\left(P(\textit{Y}=1|\textit{X}=1,\textit{W}=0)-P(\textit{Y}=1|\textit{X}=0,\textit{W}=0)\right),
\end{align*} 
is less than zero we say that an Adjusted Risk Difference Reversal has occurred.  An Adjusted Risk Difference Reversal occurs if and only if the adjusted relative risk (see $RR^{\text{true}}$ from \citet{Ding16}) is less than one.  Logically, Strong Simpson's Paradox implies Adjusted Risk Difference Reversal, and Adjusted Risk Difference Reversal implies Weak Simpson's Paradox. 

To detect reversal phenomena we have conditions we have various conditions specified as follows with $RR$ denoting the relative risk, $RD$ denoting the risk difference, and $OR$ denoting the odds ratio, and for each we use subscripts to specify the relevant variables, with the first subscript explanatory.  The Cornfield Condition is $(RR_{\textit{X}\textit{W}} \land RR_{\textit{W}\textit{Y}})> RR_{\textit{X}\textit{Y}}$.  The Risk Ratio Condition is $RR_{\textit{X}\textit{W}}RR_{\textit{W}\textit{Y}}/(RR_{\textit{X}\textit{W}}+RR_{\textit{W}\textit{Y}}-1)>RR_{\textit{X}\textit{Y}}$ \citep[Result 1]{Ding16}, and the Risk Difference Condition is $(\min\{RD_{\textit{X}\textit{W}},RD_{\textit{W}\textit{Y}}\}>RD_{\textit{X}\textit{Y}}) \land (\max\{RD_{\textit{X}\textit{W}},RD_{\textit{W}\textit{Y}}\}>\sqrt{RD_{\textit{X}\textit{Y}}})$ \citep[Theorem 1]{Ding14}. The Pearson Correlation Condition is $r(\textit{X},\textit{W})r(\textit{W},\textit{Y})>r(\textit{X},\textit{Y})$ \citep[Equation 3.24]{Cohen03}.  We refer to \begin{equation*}\label{Mxd} \left \{(\sqrt{OR_{\textit{X}\textit{W}} RR_{\textit{W}\textit{Y}}}+1)/(\sqrt{OR_{\textit{X}\textit{W}}} + \sqrt{RR_{\textit{W}\textit{Y}}}) \right \} ^2    > RR_{\textit{X}\textit{Y}}\end{equation*} \citep{Bross66, Bross67, Lee11} as the Mixed Condition because it utilizes an odds ratio and a relative risk factor.  The Odds Ratio Condition is given by \begin{align*}\label{ORC}&\left \{(\sqrt{OR_{\textit{W}\textit{X}}}-1)/(\sqrt{OR_{\textit{W}\textit{X}}}+1) \right \} \left \{ (\sqrt{OR_{\textit{W}\textit{Y}}}-1)/(\sqrt{OR_{\textit{W}\textit{Y}}}+1) \right \}\\&>r(\textit{X},\textit{Y}).\end{align*}  

The performance of a condition is measured by estimating the probability for a reversal or a non reversal given that the condition has occurred or not occurred.  These estimates are proportions computed from simulated $2\times 2\times 2$ contingency tables.  Following \citet{Pavlides09} each contingency table is randomly selected using a uniform distribution on the $7$-simplex 
$S=\{(p_1,...,p_8)\in\mathbb{R}^8:p_1+...+p_8=1\}$, where $p_1,...,p_8$ are cell probabilities.  We used the R command of \texttt{runif.rcomp} from the R package \texttt{compositions} \citep{Boog} to uniformly and randomly select points in $S$.  Given a point $(p_1,...,p_8)\in S$ we calculate $m=\min\{p_1,...,p_8\}$ and for $i=1,...,8$ the $i$th cell count of a random contingency table is determined as the smallest integer greater than $p_i/m$.  Each estimate $\hat{p}$ is obtained from a sample of $n>30,000$ contingency tables, and the formula $\sqrt{\hat{p}(1-\hat{p})/n}$ is used to compute standard errors.     
\section{Results}
Tables 1 and 2 show that the absence of stronger reversals can be reliably inferred from the absence of any condition, and the presence of weaker reversals can be reliably inferred from the presence of some of the conditions.  None of the conditions is sufficient for predicting stronger reversals, and the absence of a single condition is insufficient to rule out the weakest form of reversal. 

The Pearson Correlation Condition is the best indicator of an Adjusted Risk Difference Reversal.  However, with W unmeasured and categorical it may be difficult to estimate $r(\textit{W},\textit{X})$ and $r(\textit{W},\textit{Y})$.  The absence of the Odds Ratio Condition is the best indicator of the absence of an Adjusted Risk Difference Reversal.  Note that during application of the Odds Ratio Condition that $r(\textit{X},\textit{Y})$ can be computed unambiguously from the data.

When the Odds Ratio Condition fails to hold there is only a $2\%$ chance of an Adjusted Risk Difference Reversal.  The adjusted risk difference is an unbiased estimate for the average causal effect, if we assume that $\textit{W}$ suffices to completely control for confounding \citep{Ding16}.  Residual confounding in practice may be more concerning than uncertainty about a reversal.  The tables below may be used to select a condition with an associated probability sufficiently high so reversal uncertainty is tolerable in relation to the level of uncertainty inherent from residual confounding.
\begin{table}[h]
\centering
\caption{$P(\text{reversal}|\text{condition})$, assuming $OR_{xy}>1$.}
\label{T1}
\begin{tabular}{cccccc}
\hline
%&&\multicolumn{3}{l}{$P(\text{reversal}|\text{condition})$}\\
&&Strong Simpson&Adjusted Risk Difference&Weak Simpson\\
\hline
Cornfield&	& 			$.0805\pm .0012$	 		&$.3538\pm .0021$	&$.8607\pm .0016$\\
Risk Ratio&& 			$.0750\pm .0011$	 		&$.3794\pm .0019$	&$.8232\pm .0015$\\
Risk Difference&& 		$.1161\pm .0017$			&$.4959\pm .0027$	&$.9300\pm .0014$\\
Pearson Correlation&&	$.2004\pm .0020$			&$.7526\pm .0022$	&$.9983\pm .0002$\\
Mixed&& 				$.1626\pm .0017$			&$.6384\pm .0022$	&$.9483\pm .0010$\\
Odds Ratio&& 			$.1503\pm .0016$			&$.6132\pm .0021$	&$.9428\pm .0010$\\
\hline
\end{tabular}
\end{table}
\begin{table}[h]
\centering
\caption{$P(\text{not reversal}|\text{not condition})$, assuming $OR_{xy}>1$.}
\label{T2}
\begin{tabular}{cccccc}
\hline
%&&\multicolumn{3}{l}{$P(\text{not reversal}|\text{not condition})$}\\
&&Strong Simpson&Adjusted Risk Difference&Weak Simpson\\
\hline
Cornfield&&   			$.9912\pm .0001$ 		&$.9490\pm .0003$		&$.5364\pm .0007$\\
Risk Ratio&& 			$.9925\pm .0001$		&$.9615\pm .0003$		&$.5427\pm .0008$\\
Risk Difference&&		$.9915\pm .0001$		&$.9496\pm .0003$		&$.5285\pm .0007$\\
Pearson Correlation&&	$1\pm 0$				&$.9767\pm .0002$		&$.5395\pm .0007$\\
Mixed &&				$1\pm 0$				&$.9792\pm .0002$		&$.5451\pm .0007$\\
Odds Ratio&&			$1\pm 0$				&$.9818\pm .0002$		&$.5489\pm .0007$\\
\hline
\end{tabular}
\end{table}
\newpage
\section{Application}
\label{app}
The United States Centers for Disease Control and Prevention (CDC) has published ``Outcomes of Pregnancies with Laboratory Evidence of Possible Zika Virus Infection in the United States and the US Territories'' \citep{CDCa}.  They report that $91$ out of $1,784$ pregnancies with evidence of Zika virus infection have resulted in a live born infant with a birth defect.  This observed risk of $91/1784\approx.051$ is significantly higher than $1/33\approx .030$, which is the background risk for birth defects in the United States \citep{CDCb}.  With $\textit{X}$ indicating possible Zika virus infection on a population of pregnant mothers and $\textit{Y}$ indicating a liveborn infant with a birth defect, the relative risk can be expressed as $RR_{XY}=0.051/0.030=1.7$.  However, in \citet{CDCa} the CDC writes  ``we cannot determine whether individual defects were caused by Zika virus infection or other factors''.

With a large representative sample of controls (pregnant mothers without evidence of Zika virus infection, perhaps matched with cases on measurable covariate characteristics) we could have data in the form of a contingency table as shown in Table \ref{T3}, where the observed birth defect risk for the controls is $1/33$ as expected.  The numbers $501$ and $16533$ were chosen for demonstrational purposes.  The positive association observed in Table \ref{T3} may be spurious, due to a confounding factor $\textit{W}$ thought to cause both $\textit{X}$ and $\textit{Y}$.  For instance, $\textit{W}$ could indicate insufficient dietary intake of folic acid \citep{Weinhold09}.  According to \citet{Malone16} individuals infected with Zika virus may be asymptomatic, making it difficult to estimate $RR_{\textit{X}\textit{W}}$ or $RD_{\textit{X}\textit{W}}$, and in this categorical context correlation is difficult to estimate.  With our case-control setup $OR_{\textit{W}\textit{Y}}$ is easier to estimate than $RR_{\textit{W}\textit{Y}}$, because $OR_{\textit{W}\textit{Y}}=OR_{\textit{Y}\textit{W}}$ and $OR_{\textit{Y}\textit{W}}$ can be estimated retrospectively.  $OR_{\textit{W}\textit{X}}$ and $OR_{WY}$ reflect the causal structure of confounding.  We therefore use the Odds Ratio Condition in the following sensitivity analysis. 

Recall the probabilities from Table \ref{T1}, obtained from a random uniform distribution on the $7$-simplex.  Not all such selected tables are relevant here.  We modify the simulations and now select a $2\times 2\times 2$ table using a discrete uniform distribution on the space of all tables that collapse to Table \ref{T3}.  The resulting probabilities are shown in Tables \ref{T4} and \ref{T5}.  We may reasonably bound $OR_{\textit{W}\textit{Y}}$ below $1.4$ \citep{Ionescu-Ittu09}.  We compute $r(\textit{X},\textit{Y})=0.0328$ from Table \ref{T3}.  If we substitute $OR_{\textit{W}\textit{Y}}=1.4$ and $r(\textit{X},\textit{Y})=0.0328$ into the Odds Ratio Condition we then see that $OR_{\textit{W}\textit{X}}>5$ is required for a reversal.  Such a large odds ratio may be unreasonable \citep{Katona08}.  Both Table \ref{T2} and Table \ref{T5} then suggest that a reversal due to adjustment for confounding by folic acid deficiency is unlikely. 
\begin{table}[h]
\centering
\caption{A contingency table showing an association between Zika virus infection and birth defects.}
\label{T3}
\begin{tabular}{ccc}
\hline
&Lack of evidence&Evidence for infection \\
Birth defect presence&501&91\\
Birth defect absence&16533&1784\\
\hline
\end{tabular}
\end{table}
\begin{table}[h]
\centering
\caption{$P(\text{reversal}|\text{condition})$, assuming $OR_{xy}>1$, $OR_{wy}>1$, and conditional on Table \ref{T3}.}
\label{T4}
\begin{tabular}{cccccc}
\hline
%&&\multicolumn{3}{l}{$P(\text{reversal}|\text{condition})$}\\
&&Strong Simpson&Adjusted Risk Difference&Weak Simpson\\
\hline
Cornfield&& 			$.0544\pm .0005$ 	&$.2064\pm .0009$ 		&$.9586\pm .0004$\\
Risk Ratio&&			$.0543\pm .0005$ 	&$.2096\pm .0009$		&$.9400\pm .0005$\\
Risk Difference&&		$.0506\pm .0005$ 	&$.1890\pm .0008$		&$.9415\pm .0005$\\
Pearson Correlation&& 	$.0801\pm .0005$ 	&$.2572\pm .0008$ 		&$1\pm 0$ \\
Mixed &&				$.0716\pm .0005$ 	&$.2380\pm .0008$ 		&$.9872\pm .0002$\\
Odds Ratio&&			$.0430\pm .0003$	&$.1659\pm .0005$		&$.9089\pm .0004$\\
\hline
\end{tabular}
\end{table}
\begin{table}[h]
\centering
\caption{$P(\text{not reversal}|\text{not condition})$, assuming $OR_{xy}>1$, $OR_{wy}>1$, and conditional on Table \ref{T3}.}
\label{T5}
\begin{tabular}{cccccc}
\hline
%&&\multicolumn{3}{l}{$P(\text{not rev.}|\text{not cond.})$}\\
&&Strong Simpson &Adjusted Risk Difference&Weak Simpson\\
\hline
Cornfield&&			$.9860\pm .0001$	&$.9397\pm .0003$		&$.3382\pm .0005$\\
Risk Ratio&&			$.9860\pm .0001$	&$.9406\pm .0003$		&$.3335\pm .0005$\\
Risk Difference&&		$.9858\pm .0001$	&$.9377\pm .0003$		&$.3393\pm .0005$\\
Pearson Correlation&&	$1\pm 0$	&$.9741\pm .0002$		&$.3844\pm .0006$\\
Mixed &&				$1\pm 0$	&$.9764\pm .0002$		&$.3969\pm .0006$\\
Odds Ratio&&			$1\pm 0$	&$.9910\pm .0001$		&$.4769\pm .0007$\\
\hline
\end{tabular}
\end{table}

\section{Discussion}
\label{dis}

%%%%%%Need better into paragraph
The Odds Ratio Condition is derived in the appendix from the principle of least-squares.  Multiple regression minimizes mean square error within the class of linear models, and regression with categorical data can be reasonable in some situations \citep[Chapter 3]{AP09}.  Using indicator variables we say that a Least-Squares Reversal occurs when $\mathrm{sign}(\beta_{\textit{X}|\textit{W}})\neq \mathrm{sign}(\beta_{\textit{X}})$.  See \citet{Knaeble17} or the appendix for further details.  Probabilities for Least-Squares Reversals are not shown in the tables.  Strong Simpson's Paradox is sufficient for a Least-Squares Reversal.  The Pearson Correlation Condition and the Odds Ratio Condition are both necessary for a Least-Squares Reversal.  A Least-Squares Reversal is not necessary nor sufficient for an Adjusted Risk Difference Reversal.

The Cornfield Condition is necessary for an Adjusted Risk Difference Reversal if we assume both $RR_{xy} > 1$ and $RR_{xw} > 1$.  More complicated Risk Ratio Conditions \citep[Result 1; with their preceding definitions]{Ding16} and the Risk Difference Condition \citep[Theorem 1; lines 5 and 6]{Ding14} are necessary for an Adjusted Risk Difference Reversal, but the improvement requires specification of trivariate quantities: $RR_{\textit{W}\textit{Y}}$ conditional on both levels of $\textit{X}$.  This is close to outright specification of $P(\textit{Y}=1)$ conditional on $\textit{X}$ and $\textit{W}$, from which with adequate specification of frequencies the adjusted risk difference can be computed directly.  Here our simulations compared only those conditions built from simpler bivariate quantities.  Our analysis has been limited to the simplest case of categorical confounding by a single dichotomous covariate.  Result 1 of \citet{Ding16} is more general, and foundational for the newly introduced $E$-value \citep{Ding17}.  Our simulations suggest the possibility of an improved foundation.  

Our analysis has taken a possible interaction effect into account.  In the absence of an interaction effect Weak Simpsons Paradox can not occur, Adjusted Risk Difference Reversal and Strong Simpson's Paradox coincide, and since Strong Simpson's Paradox implies a Least-Squares Reversal, we have that the Pearson Correlation Condition and the Odds Ratio Condition are both necessary for an Adjusted Risk Difference Reversal.  With an interaction effect we can have an Adjusted Risk Difference Reversal without a Least-Squares Reversal, but according to Tables \ref{T1} and \ref{T3} this is rare, and possibly less concerning than uncertainty about residual confounding.  When an interaction effect is suspected we recommend simulations conditional on an observed data set as described in Section \ref{app}.  %Modifiable R programs for simulation are available on the arxiv.

In conclusion, we have shown how odds ratios can be used during sensitivity analysis on $2 \times 2$ contingency tables. The Odds Ratio Condition has been shown to be as reliable as other conditions that are based on risk ratios, risk differences, an odds ratio and a risk ratio, and correlation coefficients.  When the unmeasured binary third variable W is a confounder affecting both $\textit{X}$ and $\textit{Y}$, as opposed to a mediating variable affecting $\textit{Y}$ but affected by $\textit{X}$, then the odds ratio condition should be considered: $r(\textit{X},\textit{Y})$ can be computed from the data, while $OR_{\textit{W}\textit{X}}$ and $OR_{\textit{W}\textit{Y}}$ reflect the causal structure of confounding.

\section*{Appendix: Proofs}
When $\textit{Y}$ is regressed onto $\textit{X}$ we have $\beta_{\textit{X}}$ as the least-squares slope coefficient for $\textit{X}$.  We have assumed $\beta_\textit{X}>0$.
When $\textit{Y}$ is regressed onto $\textit{X}$ and $\textit{W}$ we have $\beta_{\textit{X}|\textit{W}}$ as the least-squares slope coefficient for $\textit{X}$, $\beta_{\textit{W}|\textit{X}}$ as the least-squares slope coefficient for $\textit{W}$, and $\beta_0$ as the intercept.  We say that a Least-Squares Reversal has occurred when $\beta_{\textit{X}|\textit{W}}<0$.
\label{der}
\begin{proposition}
$\emph{Strong Simpson's Paradox} \implies \emph{Least-Squares Reversal}$
\end{proposition}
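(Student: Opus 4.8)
The plan is to reduce the statement to the elementary fact that a nonnegatively-weighted average of two negative numbers is negative. Write $a_w := P(\textit{Y}=1\mid \textit{X}=1,\textit{W}=w)-P(\textit{Y}=1\mid \textit{X}=0,\textit{W}=w)$ and $q_w := P(\textit{X}=1\mid \textit{W}=w)$ for $w\in\{0,1\}$; Strong Simpson's Paradox supplies the two within-stratum inequalities $a_0<0$ and $a_1<0$ (the aggregate inequality $P(\textit{Y}=1\mid\textit{X}=1)>P(\textit{Y}=1\mid\textit{X}=0)$ will not even be needed). So it suffices to exhibit $\beta_{\textit{X}|\textit{W}}$ as a convex combination of $a_0$ and $a_1$.

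To get there I would invoke the Frisch--Waugh--Lovell identity --- equivalently, solve directly the three normal equations for the least-squares regression of $\textit{Y}$ on $1,\textit{X},\textit{W}$ --- to write $\beta_{\textit{X}|\textit{W}}$ as the simple-regression slope of $\textit{Y}$ on the residual $\tilde{\textit{X}}$ of $\textit{X}$ after $\textit{X}$ is linearly regressed on $\textit{W}$. Since $\textit{W}$ is an indicator, this residual is simply $\tilde{\textit{X}}=\textit{X}-q_{\textit{W}}$, and therefore $\beta_{\textit{X}|\textit{W}}=\operatorname{Cov}(\textit{Y},\tilde{\textit{X}})/\operatorname{Var}(\tilde{\textit{X}})$.

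Next I would compute the two moments by conditioning on $\textit{W}$, using $\mathbb{E}[\tilde{\textit{X}}]=0$ and the within-stratum factorization $P(\textit{X}=1,\textit{Y}=1\mid \textit{W}=w)=q_w\,P(\textit{Y}=1\mid \textit{X}=1,\textit{W}=w)$. A short calculation gives $\operatorname{Var}(\tilde{\textit{X}})=\sum_w P(\textit{W}=w)\,q_w(1-q_w)$ and $\operatorname{Cov}(\textit{Y},\tilde{\textit{X}})=\sum_w P(\textit{W}=w)\,q_w(1-q_w)\,a_w$, so that
\[
\beta_{\textit{X}|\textit{W}}=\frac{\sum_{w\in\{0,1\}}P(\textit{W}=w)\,q_w(1-q_w)\,a_w}{\sum_{w\in\{0,1\}}P(\textit{W}=w)\,q_w(1-q_w)} .
\]
The right-hand side is a convex combination of $a_0$ and $a_1$ with weights proportional to $P(\textit{W}=w)\,q_w(1-q_w)\ge 0$; with $a_0<0$, $a_1<0$, and the denominator strictly positive, we get $\beta_{\textit{X}|\textit{W}}<0$, i.e.\ a Least-Squares Reversal.

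The main obstacle is not conceptual but consists of two bookkeeping points. First, one must check that $\operatorname{Cov}(\textit{Y},\tilde{\textit{X}})$ really telescopes to $\sum_w P(\textit{W}=w)q_w(1-q_w)a_w$ after the conditioning step. Second, $\beta_{\textit{X}|\textit{W}}$ is defined only when the columns $1,\textit{X},\textit{W}$ are linearly independent, i.e.\ when the denominator $\operatorname{Var}(\tilde{\textit{X}})$ is strictly positive; this fails exactly in the degenerate case where some $q_w\in\{0,1\}$, which is excluded under the standing assumption (in force for the simulations) that all eight cell probabilities are positive, so $q_0,q_1\in(0,1)$. Modulo these checks the argument is the one-line convexity observation above.
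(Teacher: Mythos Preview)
Your argument is correct, and it is genuinely different from the paper's. The paper never writes down a closed form for $\beta_{\textit{X}|\textit{W}}$; instead it expands the sum of squared residuals over the four $(\textit{X},\textit{W})$ cells, notes that each summand equals $k(\hat P-P)^2$ plus a constant, and then argues by contradiction: assuming $\beta_{\textit{X}|\textit{W}}\ge 0$, a small case analysis (three cases, each producing an explicit perturbation of $(\beta_0,\beta_{\textit{X}|\textit{W}},\beta_{\textit{W}|\textit{X}})$) shows the fit can always be strictly improved, so the least-squares solution must have $\beta_{\textit{X}|\textit{W}}<0$. Your route via Frisch--Waugh--Lovell is shorter and more informative: the identity
\[
\beta_{\textit{X}|\textit{W}}=\frac{\sum_{w}P(\textit{W}=w)\,q_w(1-q_w)\,a_w}{\sum_{w}P(\textit{W}=w)\,q_w(1-q_w)}
\]
not only gives the sign immediately but shows $\beta_{\textit{X}|\textit{W}}$ always lies between the two stratum-specific risk differences, which is strictly more than the proposition asserts. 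The paper's argument buys self-containment---it uses nothing beyond the definition of least squares---whereas yours imports FWL (or equivalently solves the normal equations) but in exchange avoids the perturbation casework and yields the standard ``variance-weighted average of within-group effects'' formula familiar from, e.g., Angrist and Pischke. Your two bookkeeping checks are exactly the right ones, and the paper handles the nondegeneracy the same way you do, by invoking the standing assumption that all eight cell probabilities are positive.
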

\begin{proof}
Define $P_{00}=P(\textit{Y}=1|\textit{X}=0,\textit{W}=0)$, $P_{10}=P(\textit{Y}=1|\textit{X}=1,\textit{W}=0)$, $P_{01}=P(\textit{Y}=1|\textit{X}=0,\textit{W}=1)$, and $P_{11}=P(\textit{Y}=1|\textit{X}=1,\textit{W}=1)$.  Strong Simpson's Paradox implies $P_{00}>P_{10}$ and $P_{01}>P_{11}$.  Consider $\beta_0$, $\beta_{\textit{X}|\textit{W}}$, and $\beta_{\textit{W}|\textit{X}}$ as variables.  Define $\hat{P}_{00}=\beta_0$, $\hat{P}_{10}=\beta_0+\beta_{\textit{X}|\textit{W}}$, $\hat{P}_{01}=\beta_0+\beta_{\textit{W}|\textit{X}}$, and $\hat{P}_{11}=\beta_0+\beta_{\textit{X}|\textit{W}}+\beta_{\textit{W}|\textit{X}}$.  Let $n$ stand for the total number of observations.  The sum of the squares is a function of $(\beta_0,\beta_{\textit{X}|\textit{W}},\beta_{\textit{W}|\textit{X}})$, and it can be expressed as 
\begin{align*}
SS&=nP(\textit{X}=0,\textit{W}=0)(P_{00}(1-\beta_0)^2+(1-P_{00})\beta_0^2)\\
&+nP(\textit{X}=1,\textit{W}=0)(P_{10}(1-\beta_0-\beta_{\textit{X}|\textit{W}})^2+(1-P_{10})(\beta_0+\beta_{\textit{X}|\textit{W}})^2)\\
&+nP(\textit{X}=0,\textit{W}=1)(P_{01}(1-\beta_0-\beta_{\textit{W}|\textit{X}})^2+(1-P_{01})(\beta_0+\beta_{\textit{W}|\textit{X}})^2)\\
&+nP({\textit{X}}=1,{\textit{W}}=1)(P_{11}(1-\beta_0-\beta_{\textit{W}|\textit{X}}-\beta_{\textit{X}|\textit{W}})^2+(1-P_{11})(\beta_0+\beta_{\textit{W}|\textit{X}}+\beta_{\textit{X}|\textit{W}})^2).
\end{align*}
Within $SS$ each term is of the form 
$k(P(1-\hat{\textit{P}})^2+(1-P)\hat{P}^2)=k(\hat{\textit{P}}-P)^2+k(P-P^2)$, where $k>0$.  (As described in Section \ref{meth} our cell counts are nonzero with probability one.)  Therefore the fit improves whenever some of the distances $\{|\hat{P}_{00}-P_{00}|,|\hat{P}_{10}-P_{10}|,|\hat{P}_{01}-P_{01}|,|\hat{P}_{11}-P_{11}|\}$ decrease, as long as the others remain unchanged.

Suppose $\beta_{\textit{X}|\textit{W}}\geq0$.  $\beta_0$ and $\beta_{\textit{W}|\textit{X}}$ can be independently raised or lowered so as to improve the fit until $\beta_0\in [P_{10},P_{00}]$ or $\beta_0+\beta_{\textit{X}|\textit{W}}\in [P_{10},P_{00}]$, and $\beta_0+\beta_{\textit{W}|\textit{X}}\in [P_{11},P_{01}]$ or $\beta_0+\beta_{\textit{X}|\textit{W}}+\beta_{\textit{W}|\textit{X}}\in [P_{11},P_{01}]$.  Case 1: if $\beta_0\in [P_{10},P_{00}]$ and $\beta_0+\beta_{\textit{W}|\textit{X}}\in [P_{11},P_{01}]$ then $\exists \epsilon>0$ such that $(\beta_0,\beta_{\textit{X}|\textit{W}},\beta_{\textit{W}|\textit{X}})\mapsto (\beta_0,\beta_{\textit{X}|\textit{W}}-\epsilon,\beta_{\textit{W}|\textit{X}})$ improves the fit.  Case 2: if $\beta_0\not\in [P_{10},P_{00}]$ (which implies $\beta_0+\beta_{\textit{W}|\textit{X}}\in [P_{10},P_{00}]$) and $\beta_0+\beta_{\textit{W}|\textit{X}}\in [P_{11},P_{01}]$ then $\exists \epsilon>0$ such that $(\beta_0,\beta_{\textit{X}|\textit{W}},\beta_{\textit{W}|\textit{X}})\mapsto (\beta_0+\epsilon,\beta_{\textit{X}|\textit{W}}-\epsilon,\beta_{\textit{W}|\textit{X}}-\epsilon)$ improves the fit.  Case 3: if $\beta_0\not\in [P_{10},P_{00}]$ (which implies $\beta_0+\beta_{\textit{W}|\textit{X}}\in [P_{10},P_{00}]$) and $\beta_0+\beta_{\textit{W}|\textit{X}}\not\in [P_{11},P_{01}]$ (which implies $\beta_0+\beta_{\textit{X}|\textit{W}}+\beta_{\textit{W}|\textit{X}}\in [P_{11},P_{01}]$) then $\exists \epsilon>0$ such that $(\beta_0,\beta_{\textit{X}|\textit{W}},\beta_{\textit{W}|\textit{X}})\mapsto (\beta_0+\epsilon,\beta_{\textit{X}|\textit{W}}-\epsilon,\beta_{\textit{W}|\textit{X}})$ improves the fit.  In all three cases an improved fit is possible.  Our assumption $\beta_{\textit{X}|\textit{W}}\geq 0$ must therefore be faulty.  
\end{proof}
\begin{proposition}
$\emph{Least-Squares Reversal} \iff \emph{Pearson Correlation Condition}$.
\end{proposition}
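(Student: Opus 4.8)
The plan is to reduce the equivalence to the elementary sign computation for the slope coefficient in a two-regressor least-squares fit. Regressing $\textit{Y}$ on $\textit{X}$ and $\textit{W}$ means minimizing $SS$ over $(\beta_0,\beta_{\textit{X}|\textit{W}},\beta_{\textit{W}|\textit{X}})$; setting the gradient to zero gives the normal equations, whose solution has the familiar closed form
\[
\beta_{\textit{X}|\textit{W}}=\frac{\mathrm{Cov}(\textit{X},\textit{Y})\,\mathrm{Var}(\textit{W})-\mathrm{Cov}(\textit{W},\textit{Y})\,\mathrm{Cov}(\textit{X},\textit{W})}{\mathrm{Var}(\textit{X})\,\mathrm{Var}(\textit{W})-\mathrm{Cov}(\textit{X},\textit{W})^2},
\]
where all moments are the empirical moments over the $n$ observations. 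I would either cite this standard identity or derive it in two lines by solving the $2\times 2$ system obtained after centering the variables to eliminate $\beta_0$.

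Next I would check that the denominator is strictly positive. By Cauchy--Schwarz it is nonnegative, and it vanishes only if $\textit{X}$ and $\textit{W}$ are affinely dependent in the sample. As in the proof of Proposition 1 (cell counts are nonzero with probability one), all four cells $(\textit{X},\textit{W})\in\{0,1\}^2$ are populated, so $\textit{X}$ and $\textit{W}$ are not collinear and the Gram determinant in the denominator is $>0$; likewise $\mathrm{Var}(\textit{X}),\mathrm{Var}(\textit{W}),\mathrm{Var}(\textit{Y})>0$. Hence $\mathrm{sign}(\beta_{\textit{X}|\textit{W}})$ is exactly the sign of the numerator $\mathrm{Cov}(\textit{X},\textit{Y})\,\mathrm{Var}(\textit{W})-\mathrm{Cov}(\textit{W},\textit{Y})\,\mathrm{Cov}(\textit{X},\textit{W})$.

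Finally I would divide that numerator by $\sigma_{\textit{X}}\sigma_{\textit{Y}}\sigma_{\textit{W}}^2>0$, which leaves its sign unchanged, obtaining $r(\textit{X},\textit{Y})-r(\textit{X},\textit{W})\,r(\textit{W},\textit{Y})$. Therefore $\beta_{\textit{X}|\textit{W}}<0$ if and only if $r(\textit{X},\textit{W})\,r(\textit{W},\textit{Y})>r(\textit{X},\textit{Y})$, i.e. a Least-Squares Reversal holds if and only if the Pearson Correlation Condition holds; under the standing assumption $\beta_{\textit{X}}>0$, equivalently $r(\textit{X},\textit{Y})>0$, the condition in particular forces $r(\textit{X},\textit{W})\,r(\textit{W},\textit{Y})>0$. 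There is no genuine obstacle beyond this bookkeeping; the only point that uses the modeling hypothesis rather than pure algebra is the strict positivity of the Gram determinant (non-collinearity of $\textit{X}$ and $\textit{W}$), so I would state that step explicitly and lean on the ``nonzero cell counts with probability one'' remark already invoked in Proposition 1.
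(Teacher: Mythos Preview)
Your argument is correct and follows essentially the same route as the paper: both proofs rest on the standard identity expressing $\beta_{\textit{X}|\textit{W}}$ (equivalently its standardized form) as a ratio whose denominator is positive and whose numerator is proportional to $r(\textit{X},\textit{Y})-r(\textit{X},\textit{W})r(\textit{W},\textit{Y})$. The paper simply cites this identity from \citet{Cohen03} and asserts $1-r(\textit{W},\textit{X})^2>0$, whereas you derive the covariance form and justify strict positivity of the Gram determinant via non-collinearity; the substance is the same.
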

\begin{proof}
From \citet[Equation 3.24]{Cohen03} we have \begin{equation}\label{one}\frac{s_\textit{X}}{s_\textit{Y}}\beta_{\textit{X}|\textit{W}}=\frac{r(\textit{X},\textit{Y})-r(\textit{W},\textit{X})r(\textit{W},\textit{Y})}{1-r(\textit{W},\textit{X})^2},\end{equation}
where $s_\textit{X}$ and $s_\textit{Y}$ are standard deviations of $\textit{X}$ and $\textit{Y}$ respectively.  Since $\frac{s_\textit{X}}{s_\textit{Y}}>0$ and $(1-r(\textit{W},\textit{X})^2)>0$ we see from (\ref{one}) that 
\begin{equation}\label{two}\beta_{\textit{X}|\textit{W}}<0 \iff r(\textit{X},\textit{W})r(\textit{W},\textit{Y})>r(\textit{X},\textit{Y}).\end{equation}
The right hand side of (\ref{two}) is the Pearson Correlation Condition.
\end{proof} 
\begin{lemma}
\label{lem2}
Given a correlation $r>0$ and an associated odds ratio $OR>1$ we have \[r<(\sqrt{OR}-1)/(\sqrt{OR}+1).\]
\end{lemma}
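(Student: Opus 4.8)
The plan is to express both $r$ and $(\sqrt{OR}-1)/(\sqrt{OR}+1)$ in terms of the four entries of the underlying $2\times 2$ table and reduce the claimed inequality to a symmetric algebraic inequality that is an immediate consequence of the Cauchy--Schwarz inequality.

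First I would fix notation. Let $a,b,c,d>0$ be the four joint cell entries (counts or probabilities) for the two dichotomous variables, labelled so that $OR=ad/(bc)$, and recall that the Pearson correlation of two indicators is the phi coefficient
\[
r=\frac{ad-bc}{\sqrt{(a+b)(c+d)(a+c)(b+d)}}.
\]
Since $OR>1$ we have $ad>bc$, so $\sqrt{ad}-\sqrt{bc}>0$; writing $ad-bc=(\sqrt{ad}-\sqrt{bc})(\sqrt{ad}+\sqrt{bc})$ and noting that $(\sqrt{OR}-1)/(\sqrt{OR}+1)=(\sqrt{ad}-\sqrt{bc})/(\sqrt{ad}+\sqrt{bc})$, I would cancel the common positive factor $\sqrt{ad}-\sqrt{bc}$ and cross-multiply by the positive denominators. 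The statement then reduces to
\[
(\sqrt{ad}+\sqrt{bc})^2<\sqrt{(a+b)(c+d)(a+c)(b+d)}.
\]

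Next I would rewrite $(\sqrt{ad}+\sqrt{bc})^2=(\sqrt a\,\sqrt d+\sqrt b\,\sqrt c)^2$ and apply Cauchy--Schwarz twice: to the vectors $(\sqrt a,\sqrt b)$, $(\sqrt d,\sqrt c)$, giving $(\sqrt a\,\sqrt d+\sqrt b\,\sqrt c)^2\le (a+b)(c+d)$; and to the vectors $(\sqrt a,\sqrt c)$, $(\sqrt d,\sqrt b)$, giving $(\sqrt a\,\sqrt d+\sqrt c\,\sqrt b)^2\le (a+c)(b+d)$. Multiplying these two bounds and taking square roots yields exactly the displayed inequality (with $\le$). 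So the heart of the proof is nothing more than a double application of Cauchy--Schwarz.

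The only delicate point is strictness. Equality in the first Cauchy--Schwarz step forces $ac=bd$, and in the second it forces $ab=cd$; together these force $a=d$ and $b=c$, i.e.\ the table is symmetric across its anti-diagonal. I would therefore note that this degenerate configuration does not arise in the settings of interest (it occurs with probability zero among the randomly generated tables, and can be excluded outright in the deterministic applications to the Odds Ratio Condition), so that the inequality is strict there. I expect this strictness bookkeeping to be the only subtlety; the inequality itself is essentially free once the right algebraic reduction is made.
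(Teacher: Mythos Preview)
Your argument is correct and follows essentially the same route as the paper: write $r$ as the phi coefficient in the cell entries and then bound via AM--GM/Cauchy--Schwarz. The paper first divides numerator and denominator by $ad$ so that everything is expressed through $OR$ and the ratios $b/a,\,c/a,\,b/d,\,c/d$, and then applies $u^2+v^2\ge 2uv$ repeatedly to the denominator; your factorization $ad-bc=(\sqrt{ad}-\sqrt{bc})(\sqrt{ad}+\sqrt{bc})$ short-circuits that algebra and lands directly on $(\sqrt{ad}+\sqrt{bc})^2\le\sqrt{(a+b)(c+d)(a+c)(b+d)}$, which is a tidier packaging of the same inequality.

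On strictness you are actually more careful than the paper, which also only establishes $\le$ and does not discuss the equality case. Your observation that equality forces $a=d$ and $b=c$ (a measure-zero event under the simplex sampling of Section~\ref{meth}, and irrelevant for the downstream use in the Odds Ratio Condition) is exactly the missing remark; you might simply state it rather than hedge.
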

\begin{proof}
Define $\mathbf{x}=[0^a,1^b,0^c,1^d]$ and $\mathbf{y}=[1^a,1^b,0^c,0^d]$ from a $2\times2$ contingency table with nonzero cell frequencies $a$, $b$, $c$, and $d$.  The exponential notation indicates repeated entries.  The odds ratio is $OR=(bc)/(ad)$.  We have $bc>ad$.  For any permutation $\sigma$ of vector entries $r(\mathbf{x},\mathbf{y})=r(\sigma\mathbf{x},\sigma\mathbf{y})$.  The correlation coefficient can be expressed as \[r(\mathbf{x},\mathbf{y})=\frac{n\sum_{i=1}^nx_iy_i-\sum_{i=1}^nx_i\sum_{i=1}^ny_i}{\sqrt{n\sum_{i=1}^nx_i^2-(\sum_{i=1}^nx_i)^2}\sqrt{n\sum_{i=1}^ny_i^2-(\sum_{i=1}^ny_i)^2)}}.\]
In terms of $a$, $b$, $c$, and $d$, we have $r(\mathbf{x},\mathbf{y})=$
\begin{align*}
&\frac{(a+b+c+d)b-(b+d)(a+b)}{\sqrt{(a+b+c+d)(b+d)-(b+d)^2}\sqrt{(a+b+c+d)(a+b)-(a+b)^2}}\\
&=\frac{bc-ad}{\sqrt{(a+c)(b+d)(a+b)(c+d)}}\\
&=\frac{(bc)/(ad)-1}{\sqrt{(1+c/a)(1+b/d)(1+b/a)(1+c/d)}}\\
%&=\frac{OR-1}{(1+(bc)/(ad)+c/a+b/d)(1+(bc)/(ad)+b/a+c/d)}\\
&=\frac{OR-1}{\sqrt{(1+OR+c/a+b/d)(1+OR+b/a+c/d)}}\\
&=\frac{OR-1}{\sqrt{(1+OR)^2+(c/a+b/d+b/a+c/d)(1+OR)+(c/a+b/d)(b/a+c/d)}}\\
&\leq \frac{OR-1}{\sqrt{(1+OR)^2+4\sqrt{(bc)/(ad)}(1+OR)+4(bc)/(ad)}}\\
&= \frac{OR-1}{\sqrt{(1+OR)^2+4\sqrt{OR}(1+OR)+4OR}}\\
&= \frac{OR-1}{\sqrt{(OR+2\sqrt{OR}+1)^2}}=\frac{OR-1}{OR+2\sqrt{OR}+1}=\frac{(\sqrt{OR}+1)}{(\sqrt{OR}+1)}\frac{(\sqrt{OR}-1)}{(\sqrt{OR}+1)}=\frac{\sqrt{OR}-1}{\sqrt{OR}+1}.
\end{align*}
The inequality follows from repeated application of $u^2 + v^2 \geq 2uv$.
\end{proof}
%\textThe following two remarks validate the inequality found within the proof of the preceeding lemma.
%\begin{remark}
%We can apply the inequality $x^2 + y^2 \geq 2xy$, first with $x= \sqrt{\frac{b}{d}}$ and $y=\sqrt{\frac{c}{a}}$, and then again with $x= \sqrt{ \frac{b}{a} }$ and $y=\sqrt{\frac{c}{d} }$, to verify that $4\frac{bc}{ad} \leq (\frac{c}{a} + \frac{b}{d}) (\frac{b}{a} + \frac{c}{d})$.
%\end{remark}
%\begin{remark}
%The preceeding remark shows that $2 \sqrt{4bcad} \leq  2\sqrt{cd+ab}\sqrt{bd+ac}$ and applying the inequality $x^2 + y^2 \geq 2xy$ yields $4\sqrt{bcad} \leq cd+ab+bd+ac$ which holds if and only if $4\sqrt{ \frac{bc}{ad} } \leq \frac{c}{a} + \frac{b}{d} + \frac{b}{a} + \frac{c}{d}$.
%\end{remark}
\begin{proposition}
$\emph{Pearson Correlation Condition} \implies \emph{Odds Ratio Condition}$.
\end{proposition}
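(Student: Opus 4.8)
The plan is to get this directly from Lemma \ref{lem2} together with the blanket sign assumption and a short invariance argument. First I would record that our standing assumption $P(\textit{Y}=1|\textit{X}=1)>P(\textit{Y}=1|\textit{X}=0)$ is equivalent to $r(\textit{X},\textit{Y})>0$. Hence the Pearson Correlation Condition $r(\textit{X},\textit{W})r(\textit{W},\textit{Y})>r(\textit{X},\textit{Y})$ forces the product $r(\textit{X},\textit{W})\,r(\textit{W},\textit{Y})$ to be strictly positive, so neither factor vanishes and the two factors share a common sign. (This also disposes of the degenerate case where one of the two correlations is zero.)

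Next I would reduce to the case $r(\textit{X},\textit{W})>0$ and $r(\textit{W},\textit{Y})>0$. Relabelling the two levels of $\textit{W}$ leaves $r(\textit{X},\textit{Y})$ unchanged, sends $r(\textit{X},\textit{W})\mapsto-r(\textit{X},\textit{W})$ and $r(\textit{W},\textit{Y})\mapsto-r(\textit{W},\textit{Y})$, and sends $OR_{\textit{W}\textit{X}}\mapsto 1/OR_{\textit{W}\textit{X}}$ and $OR_{\textit{W}\textit{Y}}\mapsto 1/OR_{\textit{W}\textit{Y}}$. Since $(\sqrt{1/OR}-1)/(\sqrt{1/OR}+1)=-(\sqrt{OR}-1)/(\sqrt{OR}+1)$, both the product $r(\textit{X},\textit{W})r(\textit{W},\textit{Y})$ and the left-hand side of the Odds Ratio Condition are invariant under this relabelling. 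So, using the first paragraph, I may assume without loss of generality that $r(\textit{X},\textit{W})>0$ and $r(\textit{W},\textit{Y})>0$; and since on a $2\times2$ table $r$ and $OR-1$ carry the same sign (as is visible from the computation in the proof of Lemma \ref{lem2}, where $r\propto bc-ad$ and $OR-1=(bc-ad)/(ad)$), this forces the associated odds ratios $OR_{\textit{W}\textit{X}}>1$ and $OR_{\textit{W}\textit{Y}}>1$.

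Then I would apply Lemma \ref{lem2} twice to get $r(\textit{X},\textit{W})<(\sqrt{OR_{\textit{W}\textit{X}}}-1)/(\sqrt{OR_{\textit{W}\textit{X}}}+1)$ and $r(\textit{W},\textit{Y})<(\sqrt{OR_{\textit{W}\textit{Y}}}-1)/(\sqrt{OR_{\textit{W}\textit{Y}}}+1)$. Both sides of both inequalities are positive, so I may multiply them to obtain $r(\textit{X},\textit{W})r(\textit{W},\textit{Y})<\{(\sqrt{OR_{\textit{W}\textit{X}}}-1)/(\sqrt{OR_{\textit{W}\textit{X}}}+1)\}\{(\sqrt{OR_{\textit{W}\textit{Y}}}-1)/(\sqrt{OR_{\textit{W}\textit{Y}}}+1)\}$, and chaining this with the Pearson Correlation Condition $r(\textit{X},\textit{Y})<r(\textit{X},\textit{W})r(\textit{W},\textit{Y})$ gives precisely the Odds Ratio Condition. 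The only place that needs care is the sign bookkeeping in the reduction step; once that is settled, Lemma \ref{lem2} and a single multiplication of positive inequalities finish the argument.
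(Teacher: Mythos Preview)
Your proposal is correct and follows essentially the same route as the paper: reduce via relabelling of $\textit{W}$ to the case where both correlations with $\textit{W}$ are positive (hence both odds ratios exceed one), then apply Lemma~\ref{lem2} to each factor and multiply. Your version is in fact more careful than the paper's, since you explicitly verify that both the Pearson Correlation Condition and the Odds Ratio Condition are invariant under the relabelling $OR\mapsto 1/OR$, which the paper leaves implicit.
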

\begin{proof}
We may without loss of generality swap the labels on the categories for $\textit{W}$ so as to ensure a positive association between $\textit{W}$ and $\textit{Y}$.  If the Pearson Correlation Condition holds then $r(\textit{X},\textit{W})$ must be positive as well.  Observe then that the odds ratios $OR_{\textit{X}\textit{W}}$ and $OR_{\textit{WY}}$ are both greater than one.  Lemma 4.3 thus applies and we have 
\begin{align*}
&r(\textit{X},\textit{W})r(\textit{W},\textit{Y})>r(\textit{X},\textit{Y})\implies\\
&\left \{(\sqrt{OR_{\textit{X}\textit{W}}}-1)/(\sqrt{OR_{\textit{X}\textit{W}}}+1) \right \} \left \{ (\sqrt{OR_{\textit{W}\textit{Y}}}-1)/(\sqrt{OR_{\textit{W}\textit{Y}}}+1) \right \}>r(\textit{X},\textit{Y})
\end{align*}
\end{proof}
The preceding propositions together prove the following theorem.
\begin{theorem}[a necessary condition for Simpson's paradox]
\label{th1}
\[\emph{Strong Simpson's Paradox}\implies \emph{Odds Ratio Condition}.\]
\end{theorem}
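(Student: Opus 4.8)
The plan is to obtain the theorem by composing the three propositions and the lemma already established in this appendix, chaining Strong Simpson's Paradox down to the Odds Ratio Condition. Concretely, Proposition 4.1 gives Strong Simpson's Paradox $\implies$ Least-Squares Reversal, Proposition 4.2 gives Least-Squares Reversal $\iff$ Pearson Correlation Condition, and Proposition 4.4 gives Pearson Correlation Condition $\implies$ Odds Ratio Condition; splicing these together yields Strong Simpson's Paradox $\implies$ Odds Ratio Condition, which is precisely the claim.

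First I would invoke Proposition 4.1 to pass from the four conditional-probability inequalities defining Strong Simpson's Paradox to the single statement $\beta_{\textit{X}|\textit{W}} < 0$ about the least-squares fit of $\textit{Y}$ on $\textit{X}$ and $\textit{W}$. Next I would feed $\beta_{\textit{X}|\textit{W}} < 0$ into Proposition 4.2, whose proof rests on the standard two-predictor partial-slope identity, to conclude $r(\textit{X},\textit{W})r(\textit{W},\textit{Y}) > r(\textit{X},\textit{Y})$, i.e.\ the Pearson Correlation Condition. Finally I would apply Proposition 4.4: after the harmless relabeling of the categories of $\textit{W}$ that makes $r(\textit{W},\textit{Y}) > 0$ --- which, together with the Pearson Correlation Condition, forces $r(\textit{X},\textit{W}) > 0$ --- both $OR_{\textit{X}\textit{W}}$ and $OR_{\textit{W}\textit{Y}}$ exceed one, so Lemma 4.3 bounds each correlation by its odds-ratio surrogate $(\sqrt{OR}-1)/(\sqrt{OR}+1)$, and substituting those bounds into the Pearson Correlation Condition delivers the Odds Ratio Condition.

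Since the substantive work lives inside the cited propositions, I do not expect a genuine obstacle at the level of the theorem. Were I proving the chain from scratch, the hard part would be Proposition 4.1: showing that whenever $\beta_{\textit{X}|\textit{W}} \ge 0$ one can always perturb $(\beta_0,\beta_{\textit{X}|\textit{W}},\beta_{\textit{W}|\textit{X}})$ to strictly decrease the sum of squares, which forces the case analysis on where $\beta_0$ and $\beta_0 + \beta_{\textit{W}|\textit{X}}$ sit relative to the intervals $[P_{10},P_{00}]$ and $[P_{11},P_{01}]$. The only point of care in the assembly itself is sign bookkeeping --- verifying that the relabeling of $\textit{W}$ is legitimate and that, after it, the hypotheses $r > 0$ and $OR > 1$ of Lemma 4.3 hold simultaneously for both the $\textit{X}$--$\textit{W}$ and $\textit{W}$--$\textit{Y}$ pairs before the lemma is invoked.
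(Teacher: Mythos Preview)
Your proposal is correct and matches the paper's own argument exactly: the paper simply states that ``the preceding propositions together prove the following theorem,'' chaining Proposition~1 (Strong Simpson's Paradox $\implies$ Least-Squares Reversal), Proposition~2 (Least-Squares Reversal $\iff$ Pearson Correlation Condition), and Proposition~3 (Pearson Correlation Condition $\implies$ Odds Ratio Condition, via the lemma). Your additional commentary on the sign bookkeeping and on where the real work lies is accurate and even more explicit than the paper itself.
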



\begin{thebibliography}{99}

\bibitem[Angrist and Pischke(2009)]{AP09}  Angrist, J. and Pischke, J. (2009) \textit{Mostly Harmless Economics: An Empericist's Companion}.  Princeton University Press. 

\bibitem[Bross(1966)]{Bross66} Bross IDJ. (1966) Spurious effects from an extraneous variable. \textit{Journal of Chronic Diseases}, 19(6):637-647.

\bibitem[Bross(1967)]{Bross67} Bross IDJ. (1967) Pertinency of an extraneous variable. \textit{Journal of Chronic Diseases}, 20(7):487-495.

\bibitem[CDCa(2017)]{CDCa}
Centers for Disease Control and Prevention. (2017) Birth Defects, Data \& Statistics.  CDC, Birth Defects Homepage.  Retrieved on 4 August 2017 from: \texttt{https://www.cdc.gov/ncbddd/birthdefects/data.html}

\bibitem[CDCb(2017)]{CDCb}
Centers for Disease Control and Prevention. (2017) Zika Virus, Outcomes for Completed Pregnancies in the US States, the District of Columbia, and the US Territories.  CDC, Zika Virus Home, Reporting and Surveillance.  Retrieved on 4 August 2017 from: \texttt{https://www.cdc.gov/zika/reporting/pregnancy-outcomes.html}

\bibitem[Cohen et al.(2003)]{Cohen03}  Cohen et al. (2003) \textit{Applied Multiple Regression / Correlation Analysis for the Behavioral Sciences, Third Edition}.  Lawrence Erlbaum Associates, Publishers.

\bibitem[Cornfield et al.(1959)]{Cornfield59} Cornfield et al. (1959) Smoking and Lung Cancer: Recent Evidence and a Discussion of Some Questions.  \textit{Journal of the National Cancer Institute}, $\bf{22}$,173--203. 

\bibitem[Ding and Vanderweele(2014)]{Ding14}  Ding, P. and Vanderweele, T. (2014) Generalized Cornfield conditions for the risk difference.  \textit{Biometrika},$\bf{101}$, 1--13. 

\bibitem[Ding and Vanderweele(2016)]{Ding16}  Ding, P. and Vanderweele, T. (2016) Sensitivity Analysis Without Assumptions.  \textit{Epidemiology}, $\bf{27}$, 368--377. 

\bibitem[Ding and VanderWeele(2017)]{Ding17} Ding, P. and VanderWeele T.  (2017)  Sensitivity Analysis in Observational Research: Introducing the E-Value.  \textit{Annals of Internal Medicine}, 167(4):268-274.

\bibitem[Ionescu-Ittu et al.(2009)]{Ionescu-Ittu09}  Ionescu-Ittu, R., Marelli, A., Mackie, A, and Pilote, L.  (2009) Prevalence of severe congenital heart disease after folic acid fortification of grain products: time trend analysis in Quebec, Canada.  \textit{British Medical Journal}, 338: b1673.

\bibitem[Katona(2008)]{Katona08}  Katona, P and Katona-Apte J. (2008) The Interaction between Nutrition and Infection.  \textit{Clinical Infectious Diseases}, $\bf{46}$, 582--588.

\bibitem[Knaeble and Dutter(2017)]{Knaeble17}  Knaeble, B. and Dutter, S. (2017) Reversals of Least-Squares Estimates and Model-Invariant Estimation for Directions of Unique Effects.  \textit{The American Statistician}, $\bf{71}$, 1007--1018.

\bibitem[Lee(2011)]{Lee11} Lee WC. (2011) Bounding the Bias of unmeasured factors with confounding and effect modifying potentials. \textit{Statistics in Medicine}, 30(9):1007-1017.


\bibitem[Malone et al.(2016)]{Malone16}  Malone et al. (2016) Zika Virus: Medical Countermeasure Development Challenges.  \textit{Neglected Tropical Diseases}.  $\bf{10}$, 1582--1588.

%\bibitem[McNamee(2003)]{McNamee03}  McNamee R.  Confounding and confounders. (2003) \textit{Occupational and Environmental Medicine}, 60(3), 227--234. 

\bibitem[Pavlides and Perlman(2009)]{Pavlides09} Pavlides M. and Perlman, M. (2009) How Likely is Simpson's Paradox.  \textit{The American Statistician}, $\bf{63}$, 226--233. 

\bibitem[Pearl(2014)]{Pearl14} Pearl, J. (2014) Comment: Understanding Simpson's Paradox.  \textit{The American Statistician}, $\bf{68}$, 8--13. 

\bibitem[van den Boogaart(2013)]{Boog} van den Boogaart, K.G.  (2013).  R Package ``Compositions''.  Retrieved on 12 August 2017 from: https://cran.r-project.org/web/packages/compositions/compositions.pdf

\bibitem[Wagner(1982)]{Wagner82}  Wagner, C. (1982) Simpson's Paradox in Real Life.  \textit{The American Statistician}, $\bf{36}$, 46--48.

\bibitem[Weinhold(2009)]{Weinhold09}  Weinhold, B. (2009) Environmental Factors in Birth Defects: What We Need To Know.  \textit{Environmental Health Perspectives}, 17.
\end{thebibliography}
\end{document}